\long\def\ca#1\cb{} 
\newcommand{\EC}{\mathcal{E}}
\newcommand{\SC}{\mathcal{S}}
\renewcommand{\geq}{\geqslant}
\renewcommand{\leq}{\leqslant}
\DeclareMathOperator*{\argmax}{arg\,max}
\DeclareMathOperator*{\argmin}{arg\,min}
\renewcommand{\vec}[1]{\boldsymbol{#1}}  
\newcommand{\ad}{^\dagger}
\newtheorem{theorem}{Theorem}
\newtheorem{lemma}{Lemma}
\begin{document}

\title{The quantum low-rank approximation problem}

\author{Nic Ezzell}
\affiliation{Department of Physics \& Astronomy, University of Southern California, Los Angeles, California 90089, USA}
\affiliation{Information Sciences, Los Alamos National Laboratory, Los Alamos, NM 87545, USA}
\author{Zo\"{e} Holmes}
\affiliation{Information Sciences, Los Alamos National Laboratory, Los Alamos, NM 87545, USA}
\affiliation{Centre for Nonlinear Studies, Los Alamos National Laboratory, Los Alamos, NM 87545, USA}

\author{Patrick J. Coles}
\affiliation{Theoretical Division, Los Alamos National Laboratory, Los Alamos, NM 87545, USA}

\begin{abstract}
We consider a quantum version of the famous low-rank approximation problem. Specifically, we consider the distance $D(\rho,\sigma)$ between two normalized quantum states, $\rho$ and $\sigma$, where the rank of $\sigma$ is constrained to be at most $R$. For both the trace distance and Hilbert-Schmidt distance, we analytically solve for the optimal state $\sigma$ that minimizes this distance. For the Hilbert-Schmidt distance, the unique optimal state is $\sigma = \tau_R +N_R$, where $\tau_R = \Pi_R \rho \Pi_R$ is given by projecting $\rho$ onto its $R$ principal components with projector $\Pi_R$, and $N_R$ is a normalization factor given by $N_R = \frac{1- \Tr(\tau_R)}{R}\Pi_R$. For the trace distance, this state is also optimal but not uniquely optimal, and we provide the full set of states that are optimal. We briefly discuss how our results have application for performing principal component analysis (PCA) via variational optimization on quantum computers.
\end{abstract}

\maketitle

\section{Introduction}

Approximating a given matrix with a low-rank matrix has many applications in mathematics, engineering, and data science, such as for natural language processing or for linear systems identification~\cite{LRA_wikipedia}. This task is known as the low-rank approximation problem~\cite{eckart_approximation_1936, jolliffe_principal_2016, johnson_matrix_1990, markovsky_structured_2008, markovsky_low_2012}. The generic structure of the problem is as follows: For a given matrix $A$, find the matrix $B$ that minimizes some distance measure $D(A,B)$ subject to the constraint that $\rank(B)\leq R$, and also possibly subject to additional constraints on $B$.

In the most basic formulation, there are no additional constraints on $B$ and the distance measure is typically the Frobenius (Hilbert-Schmidt) norm. In that case, the solution is for $B$ to be a truncated version of the singular value decomposition of $A$, where the truncation is needed to satisfy the rank constraint on $B$. There are also more complicated formulations involving additional constraints on $B$, such as non-negativity or structural constraints.

In this work, we consider a quantum version of the low-rank approximation problem. Naturally, this means that we will consider two quantum states $\rho$ and $\sigma$.  As these are density matrices, they come with the constraints that they are positive semi-definite and normalized (trace one). These additional constraints make the solution non-trivial, and in particular, the solution derivation does not follow directly from prior work on low-rank approximation. Hence, the ``quantum low-rank approximation problem'' appears to be a novel problem.
We tackle this problem for two alternative distance measures: the Hilbert-Schmidt distance and the trace distance~\cite{nielsen2000quantum,wilde_classical_2017}. We note that the problem of finding the state that minimizes a relevant Hilbert-Schmidt distance (potentially subject to constraints) is investigated in other contexts in Refs.~\cite{Smolin2012Efficient, Ben2018What}.

The trace distance is often viewed as a more useful measure than the Hilbert-Schmidt distance, due to its properties and operational meaning. However, in the context of the quantum low-rank approximation problem, we argue that the opposite is true. We show that Hilbert-Schmidt distance has a unique solution to the quantum low-rank approximation problem, whereas the trace distance has a highly degenerate family of solutions. Interestingly, the same state is optimal for both distance measures, but that state is not unique in the trace distance case. As a consequence of this degeneracy issue, we argue that the Hilbert-Schmidt distance is a more useful measure for applications of the quantum low-rank approximation problem.

The most important application of this problem is principal component analysis (PCA). Specifically, it has been proposed to use quantum computers to perform PCA~\cite{lloyd2014quantum}, potentially with a significant speedup over classical computation. While the original quantum PCA algorithm involves deep circuits and many qubit resources, there has been recent effort to making quantum PCA implementable in the near term through variational approaches~\cite{larose2019variational,cerezo2020variational,verdon2019quantum,cerezo2020variationalreview}. Our work has implications along these lines. Namely, our analytical results imply that one could variationally optimize over low-rank states $\sigma$ ($\rank(\sigma)\leq R$) in order to minimize the Hilbert-Schmidt distance $D_{HS}(\rho,\sigma)$. The resulting optimized state $\sigma^* = \arg \min D_{HS}(\rho,\sigma)$ would essentially be a truncation of $\rho$ up to the $R$th principal components. Hence, this would be a means to extract the principal components of $\rho$ into the optimized state $\sigma^*$. We elaborate on the relevance of our results to PCA later in this article.

This problem also has applications for the compressibility of quantum states. Typically, the resources required to prepare, store or simulate a quantum state on a quantum computer increase with increasing rank. For example, if you store a quantum state via its purification, the number of ancilla qubits required to prepare a state of rank $r$ is $\log(r)$. Thus it may be desirable to learn how to prepare a low rank approximation to a given state in order to have a compressed approximate representation of such state. Our solution to the quantum low-rank approximation problem gives the best low rank (i.e., compressed) state to well approximate a given target state.

\section{Background: Low-rank approximation problem}

The generic low rank-approximation problem amounts to finding the best approximation of a matrix $A$ using a generally lower rank matrix $B$~\cite{markovsky_low_2012}. More formally, if $\rank(B) \leq R \leq \rank(A) = r$, we solve for $B$ that minimizes some distance measure $D(A, B)$ possibly subject to additional constraints. When $D$ is the Frobenius distance, $D(A, B) = ||A - B||_F$ and no other constraints beyond the rank constraint are applied, the problem has a unique solution given by the Eckart–Young–Mirsky theorem~\cite{eckart_approximation_1936}. The solution is most easily stated by invoking the spectral decomposition theorem which states that any matrix $A$ can be written as $A = U \Lambda V^\dagger$ for $U$, $V$ unitary matrices and $\Lambda$ a diagonal matrix with positive, ordered entries, $\lambda_1 \geq \lambda_2, \ldots, \lambda_r$ known as the singular values. The unique optimal solution is then simply $B^*(R) = \sum_{i=1}^R \lambda_i u_i v^\dagger_i$, where $\{u_i\}$ and $\{v_i\}$ are the eigenvectors of $U$ and $V$ respectively.\footnote{The same solution also works for a variety of other matrix norms such as the spectral norm $D(A, B) = ||A - B||_2$~\cite{eckart_approximation_1936}.}

Several comments are in order. First and unsurprisingly, the optimum occurs when $\rank\left(B^*(R)\right) = R$. Second, the approximation error is characterized by the highest $R$ singular values of $A$. In particular, $||A - B^*(R)||_F = \sum_{i=R+1}^{r}\lambda_i^2$, so indeed, the quality of the approximation depends only on $R$ relative to $r$. Finally, if $A$ is a square, positive semi-definite matrix then this problem is equivalent to finding the principal components of $A$~\cite{jolliffe_principal_2016}. In particular, we can interpret $A$ as a data covariance matrix $A = X^T X$ so that $\sqrt{\lambda_i}$ for $i = 1, \ldots, R$ are the principal components of $X$. In addition, $U = V$ in this case, so the optimal matrix $B$ is such that (i) it is diagonal in the same basis as $A$, and (ii) it contains the $R$ highest eigenvalues or $A$ (i.e. the principal components).

Variations on this general result come from imposing different constraints on the approximation $B$. Some common ones include positive (semi-)definiteness as in PCA~\cite{jolliffe_principal_2016} and matrix completion~\cite{markovsky_low_2012}, missing entries as in recommended systems~\cite{johnson_matrix_1990}, Hankel structure in system identification~\cite{markovsky_structured_2008}, and so on.

\section{Quantum low-rank approximation problem}

\subsection{General set-up}

In the quantum case, we replace $A$ and $B$ with two (finite-dimensional) quantum states $\rho$ and $\sigma$ where $\rho$ is the target state. On physical grounds, we require $\rho$ and $\sigma$ to be Hermitian (i.e. positive semi-definite with complex entries) and normalized $\Tr[\rho] = \Tr[\sigma] = 1$~\cite{nielsen2000quantum}. Other than that, the problem remains the same: find the optimal rank-constrained state,
\begin{equation}
    \label{eq:quantum-low-rank-approx-prob}
    \sigma^*(R) = \argmin_{\sigma \geq 0,\rank(\sigma) \leq R, \Tr(\sigma)=1} D(\rho, \sigma),
\end{equation}
for some distance measure $D$. We call this the \textit{quantum low-rank approximation problem} (QLRAP). In this work, we solve the QLRAP for the Hilbert-Schmidt distance $D_{HS}$ and the trace-norm distance $D_T$.

Note that $D_{HS} = D_{F}$ in finite dimensions, so if we drop the trace constraint on $\sigma$, then this problem reduces to finding the principal components of $\rho$.
To clarify this point, let us first introduce some useful notation we shall use from here on out. Since $\rho \geq 0$, we may decompose it as
\begin{equation}
    \rho = \sum_{i=1}^r \lambda_i \ketbra{e_i}
\end{equation}
for $\lambda_i > 0$, $\sum_{i=1}^r \lambda_i = 1$,  $\rank(\rho) = r$, and some orthonormal basis $\{\ket{e_i}\}$. Further, we shall assume for simplicity that the eigenvalues or $\rho$ are ordered and non-degenerate, $\lambda_1 > \lambda_2 > \ldots > \lambda_r$. By the Eckart-Young-Mirsky theorem, 
\begin{equation}
    \tau^*(R) = \argmin_{\tau \geq 0,\rank(\tau)\leq R} D_{HS}(\rho, \tau)
\end{equation}
has solution
\begin{equation}
    \tau^*(R) = \Pi_R \rho \Pi_R \equiv \tau_R, \ \ \ \Pi_R = \sum_{i=1}^R \ketbra{e_i},
\end{equation}
where $\Pi_R$ is simply a projector onto the $R$ principal components of $\rho$. The resulting matrix is Hermitian but clearly only normalized when $R = r$, so it cannot generally be a valid quantum state. In the remaining text, we solve Eq.~\eqref{eq:quantum-low-rank-approx-prob} for $D = D_{HS}$ and $D = D_{T}$ and then compare their usefulness in the context of quantum PCA. 

Note that we assumed the eigenvalues of $\rho$ are ordered and non-degenerate so that $\Pi_R$ has a unique definition. If we drop this assumption, then $\Pi_R$ is arbitrary up to a permutation of the indices which doesn't fundamentally change any of our results but makes the notation more cumbersome. Hence, we shall continue to make this assumption through all our proofs.

\subsection{Hilbert-Schmidt Distance}

We will now consider the QLRAP when the distance measure is the Hilbert-Schmidt distance. The Hilbert-Schmidt distance is given by~\cite{dodonov_hilbert-schmidt_2000, ozawa_entanglement_2000, wilde_classical_2017}
\begin{equation}
    D_{HS}(\rho, \sigma) = \Tr [(\rho - \sigma)^2].
\end{equation}
In what follows, we will prove that the optimal state, in this case, has the form:
\begin{equation}
\label{eqn_optimalstate_dhs}
    \sigma^*(R) = \tau_R + \left(\frac{1-\Tr[\tau_R]}{R}\right)\Pi_R\,.
\end{equation}
Interestingly, one can think of this state as corresponding to the principal components of $\rho$ (given by $\tau_R$) plus an additive normalization factor. Naively, one might guess that the normalization factor would be multiplicative and that the optimal state would simply be $\tau_R / \Tr[\tau_R]$. However, this naive guess is wrong, and there exist numerous counterexamples demonstrating that $\tau_R / \Tr[\tau_R]$ is not the optimal state (for one example see Fig.~\ref{fig:contour-plots}a). Instead, we argue that the normalization must be additive, not multiplicative, in order for the state to optimal.

We break up our proof of the optimal $\sigma^*(R)$ into two steps. The first step is the following lemma, where we show that $\rho$ and $\sigma^*(R)$ must be diagonal in the same basis. The proof involves Schur convexity and majorization.

\begin{lemma}\label{lemma1}
For the QLRAP with $D = D_{HS}$, the optimal state $\sigma^*(R)$ must be diagonal in the same basis as $\rho$. 
\end{lemma}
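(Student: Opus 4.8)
The plan is to reduce the optimization over $\sigma$ into two nested problems: an inner optimization over the eigenbasis of $\sigma$ at fixed spectrum, and an outer optimization over the spectrum itself (the latter being deferred to the second step of the proof). Expanding the objective,
\begin{equation}
  D_{HS}(\rho,\sigma) = \Tr[\rho^2] - 2\Tr[\rho\sigma] + \Tr[\sigma^2],
\end{equation}
I would note that $\Tr[\rho^2]$ is a constant and that $\Tr[\sigma^2] = \sum_j \mu_j^2$ depends only on the eigenvalues $\{\mu_j\}$ of $\sigma$, not on its eigenbasis. Hence, for any fixed choice of spectrum compatible with the constraints ($\mu_j\ge 0$, $\sum_j\mu_j = 1$, at most $R$ nonzero), minimizing $D_{HS}$ over the eigenbasis of $\sigma$ is equivalent to \emph{maximizing} the overlap $\Tr[\rho\sigma]$.

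For this inner problem I would invoke majorization. Writing $d_i = \mte{e_i}{\sigma}$ for the diagonal entries of $\sigma$ in the eigenbasis $\{\ket{e_i}\}$ of $\rho$, the overlap is the linear functional $\Tr[\rho\sigma] = \sum_i \lambda_i d_i$. By the Schur--Horn theorem the diagonal vector $\vec d$ is majorized by the spectrum $\vec\mu$ of $\sigma$ (padded with zeros), $\vec d \prec \vec\mu$. The key majorization fact is then that, because the $\lambda_i$ are sorted in decreasing order, the functional $\vec\lambda\cdot\vec d$ is maximized over the majorization polytope $\{\vec d : \vec d \prec \vec\mu\}$ at its sorted extreme point $\vec d = \vec\mu^{\downarrow}$, giving $\Tr[\rho\sigma]\le\sum_i\lambda_i\mu_i^{\downarrow}$. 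Crucially, fixing the spectrum in this way keeps the rank constraint automatically satisfied, which is exactly what makes this route viable.

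To promote this from an existence statement to the necessity claim of the lemma, I would examine the equality condition. Since $\rho$ is non-degenerate, the strict ordering $\lambda_1 > \cdots > \lambda_r$ makes the maximizer $\vec d = \vec\mu^{\downarrow}$ unique, so any optimal $\sigma^*$ must have diagonal entries (in $\rho$'s basis) equal to its own sorted eigenvalues. Together with $\sigma^*\ge 0$ --- which forces the off-diagonal entries in any row and column sharing an extremal diagonal entry to vanish --- this pins $\sigma^*$ to be diagonal in the eigenbasis of $\rho$, establishing the claim.

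I expect the main obstacle to be the interaction between the rank constraint and the basis optimization. The tempting shortcut --- simply dephasing $\sigma$ in $\rho$'s eigenbasis, which leaves $\Tr[\rho\sigma]$ unchanged while only decreasing $\Tr[\sigma^2]$ --- fails, because the dephased matrix can have rank larger than $R$ and thus leave the feasible set. Fixing the spectrum before optimizing the basis is the device that sidesteps this, and carefully handling the equality case of the majorization bound (using non-degeneracy of $\rho$ and positivity of $\sigma$) is the other delicate point.
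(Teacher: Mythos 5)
Your proposal is correct and follows essentially the same route as the paper: fix the spectrum of $\sigma$ (so $\Tr[\sigma^2]$ is constant and the rank constraint is preserved), then maximize the overlap $\Tr[\rho\sigma]$ using the Schur--Horn majorization $\vec d \prec \vec\mu$ together with the fact that a linear functional with decreasingly sorted coefficients is maximized at the sorted spectrum. Your additional treatment of the equality case (using non-degeneracy of $\rho$ and the extremal-diagonal argument to pin down that the optimizer is literally diagonal) is a careful refinement of a step the paper leaves implicit, but it is not a different approach.
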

\begin{proof}
Consider a family of states related by unitary rotation: $\sigma_U = U \sigma U\ad$. Note that all states in this family have the same purity: $\Tr(\sigma_U^2) = \Tr(\sigma^2)$. Hence, the purity terms in the Hilbert-Schmidt distance, $\Tr(\rho^2) + \Tr(\sigma_U^2) - 2 \Tr(\rho \sigma_U)$, are fixed quantities for this family. Therefore, one can simply focus on the overlap term, $\Tr(\rho \sigma_U)$, to understand which state $\sigma_U$ in the family minimizes the Hilbert-Schmidt distance. Note that one can write 
\begin{equation}
    \Tr(\rho \sigma_U) = \vec{\lambda}\cdot \vec{s}_U 
\end{equation}
where $\vec{\lambda}$ is the vector of eigenvalues of $\rho$ listed in decreasing order and $\vec{s}_U$ are the diagonal elements of $\sigma_U$ in the eigenbasis of $\rho$, i.e., ${\vec{s}_U}_i = \bra{e_i}\sigma_U\ket{e_i}.$

Given a vector $\vec{v}$ in descending order, the dot product $ \vec{v} \cdot \vec{x}$ is a Schur convex function of $\vec{x}$. Hence, $\Tr[\rho \sigma_U]$ is a Schur convex function of $\vec{s}_U$. Next we note that the eigenvalues of a positive semi-definite matrix always majorize the diagonal elements in any basis, i.e., $\vec{\lambda}_{\sigma} \succ \vec{s}_U$. Here, $\vec{\lambda}_{\sigma}$ are the eigenvalues of $\sigma$ (in decreasing order), and hence these are also the eigenvalues of every state in the family $\sigma_U$. Due to this majorization relation and the Schur convexity property, we have that
\begin{align}
 \Tr(\rho \sigma_U) &= \vec{\lambda}\cdot \vec{s}_U \\
 &\leq \vec{\lambda}\cdot \vec{\lambda}_{\sigma}\\
 & = \vec{\lambda}\cdot \vec{s}_V \\ 
 & = \Tr(\rho \sigma_{V}) \, ,
\end{align}
where $\sigma_{V} = V \sigma V\ad$, ${\vec{s}_V}_i = \bra{e_i}\sigma_V\ket{e_i}$ and $V$ satisfies $V = \argmax_U \Tr(\rho \sigma_{U})$. Since $\vec{s}_V = \vec{\lambda}_{\sigma}$ we have that $\sigma_{V}$ is diagonal in the same basis as $\rho$. 
Hence, the state that maximizes $\Tr(\rho \sigma_{U})$, and therefore minimizes the Hilbert-Schmidt distance, is one that is diagonal in the same basis as $\rho$.
\end{proof}

With this lemma in hand, we now provide the proof that the optimal $\sigma^*(R)$ for minimizing $D_{HS}$ is given by the formula in \eqref{eqn_optimalstate_dhs}. This completely characterizes the quantum low-rank approximation solution for $D = D_{HS}$. We offer two alternative proofs, one based on the  Lagrangian dual problem, and one based on Schur convexity and majorization.

\begin{theorem}\label{theorem1}
For the QLRAP with $D=D_{HS}$, the optimal state $\sigma^*(R)$ is unique and is given by
\begin{subequations}
    \begin{align}
    \sigma^*(R) &= \tau_R + N_R \\ 
    \tau_R &\equiv \Pi_R \rho \Pi_R \\
    N_R &\equiv \frac{1 - \Tr[\tau_R]}{R} \Pi_R
    \end{align}
\end{subequations}
\end{theorem}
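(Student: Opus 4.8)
The plan is to use Lemma~\ref{lemma1} to collapse the problem onto the diagonal and then split it into a continuous optimization over eigenvalues and a discrete optimization over which eigenvalues to retain. By Lemma~\ref{lemma1} any optimal candidate can be written as $\sigma=\sum_{i\in S}s_i\ketbra{e_i}$, where $S\subseteq\{1,\dots,r\}$ is a support set with $|S|\leq R$, $s_i>0$, and $\sum_{i\in S}s_i=1$. Since $\rho$ and $\sigma$ are then simultaneously diagonal in $\{\ket{e_i}\}$, the objective decomposes as
\begin{equation}
D_{HS}(\rho,\sigma)=\sum_{i\in S}(\lambda_i-s_i)^2+\sum_{i\notin S}\lambda_i^2\,,
\end{equation}
so the task reduces to (i) for each fixed $S$, minimizing over the weights $\{s_i\}$, and (ii) minimizing over the choice of $S$.

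For step (i) I would hold $S$ fixed with $|S|=m\leq R$ and minimize the strictly convex quadratic $\sum_{i\in S}(\lambda_i-s_i)^2$ subject to the single linear constraint $\sum_{i\in S}s_i=1$. Introducing a Lagrange multiplier (equivalently, writing the KKT conditions, which here reduce to stationarity because the nonnegativity constraints turn out inactive) gives the uniform additive shift $s_i=\lambda_i+\tfrac{1-P_S}{m}$ with $P_S\equiv\sum_{i\in S}\lambda_i$. I would then verify feasibility: since $\lambda_i>0$ and $P_S\leq1$, the shift is nonnegative and every $s_i>0$, confirming the stationary point is the genuine constrained minimizer. Substituting back yields the minimal value $V(S)=\tfrac{(1-P_S)^2}{m}+\sum_{i\notin S}\lambda_i^2$.

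Step (ii) is the discrete optimization and I expect it to be the crux. I would establish optimality of the top-$R$ support via two exchange arguments. First, at fixed support size, swapping an index $j\in S$ for an index $k\notin S$ with $\lambda_k>\lambda_j$ changes the value by $\delta\big[\tfrac{\delta-2(1-P_S)}{m}-(\lambda_k+\lambda_j)\big]$ with $\delta=\lambda_k-\lambda_j>0$; this is strictly negative because $1-P_S=\sum_{i\notin S}\lambda_i\geq\lambda_k$, so the best support of a given size consists of the largest eigenvalues. Second, comparing the best top-$m$ supports across sizes, an analogous estimate gives $V(\{1,\dots,R\})<V(\{1,\dots,R-1\})$, so using the full allowed rank is strictly advantageous, and by induction the optimal size is exactly $R$. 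Together these force $S=\{1,\dots,R\}$, i.e. $\sigma^*=\tau_R+N_R$.

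Finally, uniqueness would follow by combining the strictness of both exchange inequalities with the non-degeneracy assumption $\lambda_1>\dots>\lambda_r$: the top-$R$ set is then the \emph{unique} minimizing support, and on that support the objective is strictly convex so its minimizer is unique. The main obstacle is precisely step (ii): the rank constraint is non-convex, so the Lagrangian/KKT machinery alone cannot exclude alternative supports, and the real work is in the exchange estimates and in checking that every relevant inequality is strict. As an alternative matching the language already used in Lemma~\ref{lemma1}, one could recast step (ii) as a Schur-convexity statement in the retained eigenvalues rather than an explicit swap computation.
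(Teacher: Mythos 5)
Your proposal is correct, and it sits between the paper's two proofs of Theorem~\ref{theorem1} rather than coinciding with either. The paper's Proof~1 runs exactly your step (i) --- the Lagrange-multiplier computation yielding the uniform additive shift --- but only after silently fixing the support to be the top-$R$ eigenvectors, so it never confronts your step (ii); your remark that the rank constraint is non-convex and that KKT stationarity alone cannot exclude alternative supports identifies precisely what that proof leaves implicit. The paper's Proof~2 shares your two-stage decomposition: for fixed support $\mathcal{S}$ it derives your $V(S)$ (its $g(\mathcal{S})$ in Eq.~\eqref{eqn_gofS_proof2}), but it reaches the fixed-support optimum via a majorization chain $\vec{q}\succ\vec{t}\succ\vec{t}'$ and Schur convexity of $\sum_i x_i^2$ rather than your quadratic/KKT argument, and it settles the support question not by exchange inequalities but by the simpler monotonicity observation that enlarging $\mathcal{S}$ shrinks every term of $g(\mathcal{S})$, while at fixed size both $\sum_{i\notin\mathcal{S}}\lambda_i$ and $\sum_{i\notin\mathcal{S}}\lambda_i^2$ are simultaneously minimized by retaining the $R$ largest eigenvalues --- no swap computation needed. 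Your exchange estimates are nonetheless correct (the swap increment $\delta\left[\frac{\delta-2(1-P_S)}{m}-(\lambda_k+\lambda_j)\right]$ is strictly negative since $1-P_S\geq\lambda_k$, and the size-increase inequality follows since $(1-P_{R})^2/R<(1-P_{R-1})^2/(R-1)$), and they buy explicit strictness at every step, which feeds your uniqueness argument directly, whereas the paper obtains uniqueness by citing strict Schur convexity (Proof~2) or strong convexity on the presupposed support (Proof~1).
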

\begin{proof}

\textbf{Proof 1:} Our first proof is based on the method of Lagrange multipliers~\cite{protter_basic_1998}. By Lemma~\ref{lemma1}, we can rewrite the Hilbert-Schmidt distance as a quadratic form,
\begin{align}
    D_{HS}(\rho, \sigma) &= ||\vec{\lambda}||_2^2 + ||\vec{\lambda_{\sigma}}||_2^2 - 2 \vec{\lambda} \cdot \vec{\lambda_{\sigma}}\,.
\end{align}
Here, $\vec{\lambda} = (\lambda_1, \lambda_2, \ldots, \lambda_r)$ are the fixed eigenvalues of $\rho$ ordered in decreasing order, and $\vec{\lambda_{\sigma}} = (\lambda_{\sigma 1}, \lambda_{\sigma 2}, \ldots, \lambda_{\sigma R}, 0, \ldots, 0\}$ are the $R$ variable eigenvalues (not necessarily ordered) of $\sigma^*(R)$  appended with $r - R$ zeros to match dimension. Note that $\lambda_{\sigma i} \geq 0$, so we are not excluding the possibility that the solution has rank less than $R$. By using Lagrange multipliers $\alpha, \beta$, we may include the constraints $\rank(\sigma) = R$, $\Tr[\sigma] = 1$, and $\Tr[\rho] = 1$ into a Lagrangian, 
\begin{multline}
    L = \sum_{i=1}^R (\lambda_{\sigma i} - \lambda_i)^2 + \sum_{i=R+1}^r \lambda_i^2\\
    -\alpha (1 - \sum_{k=1}^R \lambda_{\sigma k}) - \beta (1 - \sum_{k=1}^r \lambda_k).
\end{multline}
Taking first derivatives, we get
\begin{align}
    \label{eq:lambda-div}
    \pdv{C}{\lambda_{\sigma k}} &= 2 (\lambda_{\sigma k} - \lambda_k) - \alpha = 0 \\
    \label{eq:alpha-div}
    \pdv{C}{\alpha}&= 1 - \sum_{k=1}^R \lambda_{\sigma k} = 0 \\
    \label{eq:beta-div}
    \pdv{C}{\beta} &= 1 - \sum_{k=1}^r \lambda_k = 0. 
\end{align}
Plugging Eqs.~\eqref{eq:alpha-div} and \eqref{eq:beta-div} into Eq.~\eqref{eq:lambda-div} and summing over the first $R$ entries, we get
\begin{subequations}
    \begin{align}
        2 \sum_{k=1}^R \lambda_{\sigma k} - 2 \sum_{k=1}^R \lambda_k &= R \alpha \\
        2 \left( 1 - \Tr[\Pi_R \vec{\lambda}] \right) &= R \alpha \\
        \therefore \alpha^* &= \frac{2 (1 - \Tr[\Pi_R \vec{\lambda}])}{R}.
    \end{align}
\end{subequations}
Plugging this $\alpha^*$ back into Eq.~\eqref{eq:lambda-div}, we get
\begin{equation}
    \lambda_{\sigma k}^* = \lambda_k + \frac{(1 - \Tr[\Pi_R \vec{\lambda}])}{R} \,,\quad \forall k \leq R,
\end{equation}
and $\lambda_{\sigma k}^* = 0$ for $k  > R$ is already encoded into the form of $\vec{\lambda_{\sigma}}$. Note that $\lambda_{\sigma i} \geq \lambda_{\sigma j}$ for $i > j$ arises naturally from the assumed eigenvalue ordering of $\vec{\lambda}$, but was not an assumption of this proof. Turning these vectors into diagonal matrices, we get the desired result. Because we began with a quadratic form which is strongly convex, the found solution is not only a global minimum, but also a unique global minimum.

\textbf{Proof 2:} Our second proof is based on Schur convexity. We write the Hilbert-Schmidt distance as
\begin{align}
    D_{HS}(\rho,\sigma) &= \sum_{i} (\lambda_{\sigma i} - \lambda_{ i})^2 \label{eqn_proof2_1}\\
    &= \sum_{i\in \SC} (\lambda_{\sigma i} - \lambda_{ i})^2 + \sum_{i\notin \SC} \lambda_{ i}^2
\end{align}
where $\SC$ is the set of indices over which $\sigma$ has support. Consider the vector 
\begin{equation}
 \vec{q} = \{q_i\}_{i\in \SC} = \{ \abs{\lambda_{\sigma i} - \lambda_{ i}}\}_{i\in \SC},
\end{equation}
where we assume that the elements $q_i$ are ordered in decreasing order. Note that we can write
\begin{align}
    D_{HS}(\rho,\sigma) = \sum_{i\in \SC} q_i^2 + \sum_{i\notin \SC} \lambda_{ i}^2
\end{align}
Also, consider a vector (of size $|\SC |$) with uniform elements $\vec{t}= \{Q/|\SC |\}$, where $Q = \sum_{i\in \SC} q_i$ and $|\SC |$ is the cardinality of $\SC$. Then it is clear that we have the majorization relation $\vec{q}\succ \vec{t}$, since $\vec{t}$ is a completely flat vector and $\sum_i q_i = \sum_i t_i $. In addition, for another uniform vector $\vec{t'} = \{t_i'\}$ of size $|\SC|$, we have $\vec{t}\succ \vec{t'}$, where $t_i' = {Q'/|\SC |}$ for all $i$. Here we have defined 
\begin{equation}
Q' = \sum_{i\in \SC} (\lambda_{\sigma i} - \lambda_{ i}) = 1 -  \sum_{i\in \SC} \lambda_{ i} = \sum_{i\notin \SC} \lambda_{ i} \,.    
\end{equation}
This follows simply because $Q \geq Q'$. Hence we have that
\begin{equation}
\vec{q}\succ \vec{t} \succ \vec{t'}\,.  \end{equation}
Now note that the function $f(\vec{x}) = \sum_i x_i^2$ is a (strictly) Schur convex function of $\vec{x}$~\cite{aniello2016characterizing}. This implies that $f(\vec{q})\geq f(\vec{t'})$. Therefore we have
\begin{align}
    D_{HS}(\rho,\sigma) &\geq \sum_{i\in \SC} (t'_i)^2 + \sum_{i\notin \SC} \lambda_{ i}^2\\
    & = \sum_{i\in \SC} (Q'/|\SC |)^2 + \sum_{i\notin \SC} \lambda_{ i}^2\,.
\end{align}
This lower bound on the Hilbert-Schmidt distance is achievable if we set the eigenvalues of $\sigma$ to be such that $\lambda_{\sigma i} = \lambda_i + Q'/|S|$. Hence, for a fixed choice of $\SC$, the minimal value of $D_{HS}(\rho,\sigma)$ is given by:
\begin{equation}
\begin{aligned}
    g(\SC) &= \sum_{i\in \SC} (Q'/|\SC |)^2 + \sum_{i\notin \SC} \lambda_{ i}^2\, \\
    &= \frac{1}{|\mathcal{S}|} \left(\sum_{i\notin \SC} \lambda_{ i}\right)^2+  \sum_{i\notin \SC} \lambda_{ i}^2\,.\label{eqn_gofS_proof2}
\end{aligned}
\end{equation} 

Let us consider minimizing the expression in~\eqref{eqn_gofS_proof2} over all $\SC$, subject to the constraint that $|\SC | \leq R$. First, it is clear that the expression is minimized when $|\SC|$ is as large as possible, and hence when $|\SC| = R$. Moreover, the terms $\sum_{i\notin \SC} \lambda_{ i}$ and $\sum_{i\notin \SC} \lambda_{ i}^2$ are minimized whenever $\SC$ contains the $R$ largest eigenvalues of $\rho$, and hence the complement of $\SC$ contains the $r - R$ smallest eigenvalues of $\rho$. Therefore, all of support of $\sigma$ should be concentrated on the $R$ largest eigenvalues of $\rho$, in order to be optimal. Hence the optimal eigenvalues of $\sigma$ are $\lambda_{\sigma i} = \lambda_i + (1-\sum_{i=1}^R \lambda_i)/R$ for $i\leq R$ and $\lambda_{\sigma i} = 0$ for $i >R$, assuming the index $i$ is ordered according to the eigenvalue ordering of $\rho$. Note that this spectrum corresponds precisely to the spectrum of $\tau_R + N_R$. Hence, combined with Lemma~\ref{lemma1}, we have completed the proof that $\tau_R + N_R$ is the optimal state. Finally, noting that the function $f(\vec{x}) = \sum_i x_i^2$ is strictly Schur convex~\cite{aniello2016characterizing} implies that any other choice of spectrum would increase the value of the objective function, and hence the state $\tau_R + N_R$ is uniquely optimal.
\end{proof}

Amazingly, the only difference between $\sigma^*(R)$ with a trace constraint and $\tau^*(R)$ without it is a simple additive normalization term $N_R$. In both cases, the solution is unique which we explore visually in Fig.~\ref{fig:contour-plots}(a). Finally, given the optimal low-rank approximation, the corresponding minimum Hilbert-Schmidt distance is
\begin{subequations}
    \begin{align}
         D^*_{HS}(\rho, \sigma^*(R)) &= \Tr[(I - \Pi_R) \rho^2] + \Tr[N_R^2] \\
         &= \sum_{i=R + 1}^r \lambda_i^2 + \sum_{i=1}^R (\lambda_{\sigma i} - \lambda_i)^2,
    \end{align}
\end{subequations}
where the second equality gives an intuitive interpretation. The first term accounts for the $r - R$ eigenvalues we truncate in our approximation. The second term accounts for the constant, additive error between the original eigenvalues $\lambda_i$ and their corresponding approximated values $\lambda_{\sigma i}$. Indeed, we can begin with this intuitive optimal distance and then show that $\sum_{i=R + 1}^r \lambda_i^2 = \Tr[(I - \Pi_R) \rho^2] $ and $  \sum_{i=1}^R (\lambda_{\sigma i} - \lambda_i)^2 = \Tr[N_R^2].$

From this interpretation, $D^*_{HS}$ is clearly a monotonically decreasing function with increasing $R$ that reaches $D^*_{HS} = 0$ when $R \geq r$ but no sooner. This captures the intuitive and obvious notion that the rank $R$ approximation gets better with larger $R$.   

We note that the low rank approximation problem for the Hilbert-Schmidt distance, Eq.~\eqref{eqn_optimalstate_dhs}, bares structural similarities to the problem of finding the state that minimizes the distance from an arbitrary normalized Hermitian operator~\cite{Smolin2012Efficient}. However, conceptually these two problems are rather different in that the latter involves no rank constraint and is not looking at the distance between two states.

\subsection{Trace distance}
The Trace distance is given by~\cite{nielsen2000quantum, wilde_classical_2017}
\begin{equation}
    D_{T}(\rho, \sigma) = \frac{1}{2}\Tr [|\rho - \sigma|].
\end{equation}
We break up our proof of the optimal $\sigma^*(R)$ for the trace norm into several steps, with the following lemma being the first step.

\begin{lemma}\label{lemma2}
Let $\rho$ and $\sigma$ be arbitrary quantum states. Let $\tilde{\Pi}$ be a projector that is diagonal in the same basis as $\rho$. Then there exists another state $\sigma' = U\sigma U\ad$ that is a unitarily related to $\sigma$ and is diagonal in the same basis as $\rho$, with the property that
\begin{equation}
    D_T(\rho,\sigma) \geq \Tr [\tilde{\Pi} (\rho - \sigma')]\,.
\end{equation}
\end{lemma}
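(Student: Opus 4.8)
The plan is to combine two standard facts: the variational (Helstrom) characterization of the trace distance, and Ky Fan's maximum principle for sums of the largest eigenvalues. The variational characterization states that $D_T(\rho,\sigma) = \max_{0 \leq P \leq I} \Tr[P(\rho-\sigma)]$, the maximum being attained by the projector onto the positive part of $\rho-\sigma$. Since any projector $\tilde{\Pi}$ satisfies $0 \leq \tilde{\Pi} \leq I$, this immediately yields $D_T(\rho,\sigma) \geq \Tr[\tilde{\Pi}(\rho-\sigma)]$. All that then remains is to replace $\sigma$ by a suitably unitarily-rotated copy $\sigma'$, diagonal in the eigenbasis of $\rho$, without increasing the right-hand side.

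To build $\sigma'$, I would exploit the freedom in choosing $U$. Writing $\tilde{\Pi} = \sum_{i \in T}\ketbra{e_i}$, with $T$ a subset of $\rho$'s eigenbasis and $m = \rank(\tilde{\Pi}) = |T|$, I define $\sigma'$ to be the rotation of $\sigma$ into the eigenbasis of $\rho$ in which the $m$ largest eigenvalues of $\sigma$ occupy exactly the diagonal positions indexed by $T$. Because $\tilde{\Pi}$ and $\sigma'$ are then simultaneously diagonal, one has $\Tr[\tilde{\Pi}\sigma'] = \sum_{k=1}^{m} s_k^{\downarrow}$, the sum of the $m$ largest eigenvalues of $\sigma$.

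The key step is to observe that Ky Fan's maximum principle gives $\Tr[\tilde{\Pi}\sigma] \leq \sum_{k=1}^{m} s_k^{\downarrow} = \Tr[\tilde{\Pi}\sigma']$ for the original $\sigma$, since $\tilde{\Pi}$ is a rank-$m$ projector and the sum of the $m$ largest eigenvalues upper-bounds $\Tr[P\sigma]$ over all such projectors. Because $\Tr[\tilde{\Pi}\rho]$ is unaffected by this rotation, this is equivalent to $\Tr[\tilde{\Pi}(\rho-\sigma)] \geq \Tr[\tilde{\Pi}(\rho-\sigma')]$. Chaining with the variational bound gives $D_T(\rho,\sigma) \geq \Tr[\tilde{\Pi}(\rho-\sigma)] \geq \Tr[\tilde{\Pi}(\rho-\sigma')]$, which is exactly the claim.

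I expect the main difficulty to be bookkeeping rather than conceptual. One must check that the unitary freedom genuinely allows aligning the largest eigenvalues of $\sigma$ with the support of $\tilde{\Pi}$ --- this works precisely because $\tilde{\Pi}$ is assumed diagonal in the basis of $\rho$, so that $\sigma'$ and $\tilde{\Pi}$ commute and $\Tr[\tilde{\Pi}\sigma']$ reduces to a sum of selected eigenvalues --- and one must track the direction of each inequality, since we simultaneously lower-bound $D_T$ and maximize $\Tr[\tilde{\Pi}\sigma']$. No delicate estimates or case analysis are required.
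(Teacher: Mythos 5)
Your proof is correct, and its skeleton matches the paper's: both start from the variational characterization $D_T(\rho,\sigma)=\max_{\Pi}\Tr[\Pi(\rho-\sigma)]$, specialize it to the given $\tilde{\Pi}$, and then show that $\sigma$ can be replaced by a rotated state commuting with $\rho$ without increasing $\Tr[\tilde{\Pi}(\rho-\cdot)]$... wait, without \emph{decreasing} $\Tr[\tilde{\Pi}\,\cdot\,]$. Where you genuinely differ is in the second step. The paper fixes a single $\sigma'$ once and for all --- the rotation whose eigenvalues, listed in decreasing order, lie along the ordered eigenbasis of $\rho$ --- and argues $\Tr[\tilde{\Pi}\sigma']\geq\Tr[\tilde{\Pi}\sigma]$ via the majorization $\vec{\lambda_{\sigma}}\succ\vec{d_{\sigma}}$ (spectrum majorizes diagonal) plus Schur convexity of $\vec{x}\mapsto\vec{\lambda_{\tilde{\Pi}}}\cdot\vec{x}$. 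You instead let $\sigma'$ depend on $\tilde{\Pi}$, placing the $m=\rank(\tilde{\Pi})$ largest eigenvalues of $\sigma$ exactly on $\supp\,\tilde{\Pi}$, and close with Ky Fan's maximum principle. The two matrix-analysis inputs are essentially the same fact in different packaging (Ky Fan's principle is precisely what the spectrum-majorizes-diagonal statement gives for sums of $m$ entries), but your $\tilde{\Pi}$-adapted construction is tidier on one point: the paper's identity $\Tr[\tilde{\Pi}\sigma]=\vec{\lambda_{\tilde{\Pi}}}\cdot\vec{d_{\sigma}}$, with $\vec{\lambda_{\tilde{\Pi}}}$ sorted decreasingly, only literally holds when $\tilde{\Pi}$ projects onto the first $m$ vectors of the ordered basis; your argument needs no such alignment.

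One caveat to flag. Because your $\sigma'$ depends on $\tilde{\Pi}$, you prove the lemma exactly as quantified (``for every $\tilde{\Pi}$ there exists $\sigma'$''), but the paper's use of this lemma in Lemma~\ref{lemma3} implicitly needs the quantifiers in the stronger order: there, $\tilde{\Pi}=\tilde{\Pi}_{\text{opt}}$ is chosen as the optimal projector for distinguishing $\rho$ from $\sigma'$ \emph{itself}, so one needs a single $\sigma'$, fixed in advance, for which the inequality holds for a projector defined in terms of that same $\sigma'$. Applying your lemma to $\tilde{\Pi}_{\text{opt}}$ would return a possibly different state $\sigma'(\tilde{\Pi}_{\text{opt}})$, and the chain in Lemma~\ref{lemma3} would not close without an extra fixed-point argument. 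So as a proof of the stated lemma your argument stands; if you intend it to feed into Lemma~\ref{lemma3}, you should additionally verify the inequality for the specific pair consisting of the canonical decreasing-ordered $\sigma'$ and its associated $\tilde{\Pi}_{\text{opt}}$.
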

\begin{proof}
From the operational meaning of the trace distance~\cite{nielsen2000quantum}, we can write
\begin{equation}
    D_T(\rho,\sigma) = \max_{\Pi} \Tr [\Pi (\rho - \sigma)]
\end{equation}
where the optimization is over all projectors $\Pi$. Since $\tilde{\Pi}$ is a particular projector, we have
\begin{align}
    D_T(\rho,\sigma) & \geq  \Tr [\tilde{\Pi} (\rho - \sigma)] \\
&=\Tr [\tilde{\Pi} \rho] - \Tr[\tilde{\Pi}  \sigma]\\
&=\Tr [\tilde{\Pi} \rho] - f(\vec{d_{\sigma}})\label{eqn_lemma2_middlestep}
\end{align}
where $f(\vec{d_{\sigma}}) = \vec{\lambda_{\tilde{\Pi}}}\cdot \vec{d_{\sigma}}$ is a dot product of two vectors. In particular,  $\vec{\lambda_{\tilde{\Pi}}}$ is the vector of eigenvalues of $\tilde{\Pi}$ listed in decreasing order and $\vec{d_{\sigma}}$ is the vector of diagonal elements of $\sigma$ in the eigenbasis of $\rho$ (and hence also the eigenbasis of $\tilde{\Pi}$).

Now, let $\sigma' = U \sigma U\ad$ be such that its eigenvalues $\vec{\lambda_{\sigma'}}$ (listed in decreasing order) correspond to its diagonal elements $\vec{d_{\sigma'}}$ in the eigenbasis of $\rho$. So we have $\vec{\lambda_{\sigma'}} = \vec{d_{\sigma'}}$. Then it follows that
\begin{equation}
    \vec{\lambda_{\sigma'}} = \vec{d_{\sigma'}} \succ \vec{d_{\sigma}}\,,
\end{equation}
since the eigenvalues of a positive semi-definite matrix always majorize its diagonal elements in any basis. Next, we invoke the fact that $f$ is a Schur convex function, which implies that 
\begin{equation}
   f( \vec{d_{\sigma'}}) \geq f( \vec{d_{\sigma}})\,.
\end{equation}
In other words, we have $\Tr[\tilde{\Pi}  \sigma'] \geq \Tr[\tilde{\Pi}  \sigma]$. Plugging this into Eq.~\eqref{eqn_lemma2_middlestep} gives
\begin{align}
    D_T(\rho,\sigma) & \geq \Tr [\tilde{\Pi} \rho] - \Tr[\tilde{\Pi}  \sigma']\\
    &= \Tr [\tilde{\Pi} (\rho  - \sigma')]\,.
\end{align}
This is the desired inequality that we wished to prove, hence completing the proof.
\end{proof}

Next we use the above lemma to argue that the optimal state is diagonal in the same basis as $\rho$.

\begin{lemma}\label{lemma3}
For the QLRAP with $D = D_{T}$, the optimal state $\sigma^*(R)$ must be diagonal in the same basis as $\rho$. 
\end{lemma}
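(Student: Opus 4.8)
The plan is to use Lemma~\ref{lemma2} to show that any feasible candidate state can be replaced by a unitarily equivalent state that is diagonal in the eigenbasis of $\rho$, without ever increasing the trace distance. Since a unitary rotation preserves the spectrum, it preserves positivity, the trace, and the rank, so the rotated state is still a legitimate competitor in the QLRAP. Establishing this ``rotating into $\rho$'s basis does not hurt'' property immediately reduces the optimization to diagonal states and hence yields the claim.

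First I would fix an arbitrary feasible state $\sigma$ (so $\sigma \geq 0$, $\Tr[\sigma]=1$, $\rank(\sigma)\leq R$) and invoke Lemma~\ref{lemma2} to produce the associated state $\sigma' = U\sigma U\ad$, which is diagonal in the eigenbasis of $\rho$ and obeys $D_T(\rho,\sigma)\geq \Tr[\tilde{\Pi}(\rho-\sigma')]$ for \emph{every} projector $\tilde{\Pi}$ that is diagonal in that basis. The crucial move is then to choose $\tilde{\Pi}$ cleverly. Because $\rho$ and $\sigma'$ are both diagonal in the eigenbasis of $\rho$, the difference $\rho-\sigma'$ is also diagonal in that basis, so the projector $\Pi_+$ onto the positive part of $\rho-\sigma'$ is itself diagonal in the eigenbasis of $\rho$ and is therefore an admissible choice of $\tilde{\Pi}$. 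By the operational characterization of the trace distance, $D_T(\rho,\sigma') = \max_{\Pi}\Tr[\Pi(\rho-\sigma')] = \Tr[\Pi_+(\rho-\sigma')]$, i.e. $\Pi_+$ is precisely the optimizing projector for $\sigma'$.

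Substituting $\tilde{\Pi}=\Pi_+$ into the inequality supplied by Lemma~\ref{lemma2} gives
\begin{equation}
    D_T(\rho,\sigma) \geq \Tr[\Pi_+(\rho-\sigma')] = D_T(\rho,\sigma')\,.
\end{equation}
Thus for every feasible $\sigma$ there is a feasible state $\sigma'$, diagonal in the eigenbasis of $\rho$, whose trace distance to $\rho$ is no larger. Taking the infimum over all feasible states, the minimum is attained by a state diagonal in the same basis as $\rho$, which is the desired conclusion.

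The main obstacle is the step guaranteeing that $\Pi_+$ is diagonal in $\rho$'s basis: it relies entirely on Lemma~\ref{lemma2} delivering a $\sigma'$ that is simultaneously diagonalizable with $\rho$, so that the optimal measurement in the operational formula coincides with a projector of the form that Lemma~\ref{lemma2} permits. Without this co-diagonalizability the optimizing $\tilde{\Pi}$ need not lie in the admissible family and the chain of inequalities would break. A minor caveat to record is that this argument shows \emph{existence} of a diagonal optimizer rather than forcing every optimizer to be diagonal; given the degeneracy of the trace-distance solution noted earlier, this is the appropriate reading and suffices to reduce the remaining analysis to diagonal states.
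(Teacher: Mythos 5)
Your proposal is correct and follows essentially the same route as the paper's own proof: the paper likewise applies Lemma~\ref{lemma2} with $\tilde{\Pi}$ chosen to be the Helstrom-optimal projector for distinguishing $\rho$ from $\sigma'$ (i.e., the projector onto the positive part of $\rho-\sigma'$, which is diagonal in $\rho$'s eigenbasis precisely because $\sigma'$ is), and then chains $D_T(\rho,\sigma)\geq \Tr[\tilde{\Pi}_{\text{opt}}(\rho-\sigma')] = D_T(\rho,\sigma')$ to conclude that a diagonal optimizer exists. Even the caveats you record --- that everything hinges on $\sigma'$ being co-diagonal with $\rho$ so that the optimal measurement lies in the family Lemma~\ref{lemma2} permits, and that the argument establishes existence of a diagonal optimizer rather than diagonality of every optimizer --- mirror the paper's own presentation.
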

\begin{proof}
Let $\sigma$ be an arbitrary state. Then we will argue that there exists another state $\sigma' = U\sigma U\ad$ that is a unitarily related to $\sigma$ and is diagonal in the same basis as $\rho$, with the property that 
\begin{equation}\label{eqn_lemma3_A}
    D_T(\rho,\sigma) \geq D_T(\rho, \sigma')\,.
\end{equation}
This will allow us to argue that, for any given state $\sigma$, there is always a corresponding state $\sigma'$ that is diagonal in the same basis as $\rho$ and that is closer to $\rho$ than $\sigma$ is. Hence the optimal state is diagonal in the same basis as~$\rho$.

To make this argument, we will apply the Lemma~\ref{lemma2} with a judicious choice of $\tilde{\Pi}$. Specifically, let us choose $\tilde{\Pi} = \tilde{\Pi}_{\text{opt}}$ to be the optimal projector for distinguishing $\rho$ from $\sigma'$. 
In that case, we have:
\begin{align}
  \Tr [\tilde{\Pi}_{\text{opt}} (\rho - \sigma')] &= \max_{\Pi} \Tr [\Pi (\rho - \sigma')] \\
  &= D_T(\rho, \sigma')\,.\label{eqn_lemma3_D}
\end{align}

It is crucial to note that we can choose this choice of $\tilde{\Pi}$ and still apply Lemma~\ref{lemma2}, because the optimal projector $\tilde{\Pi}_{\text{opt}}$ is diagonal in the same basis as $\rho$. In more detail, let us define the operator $\Delta = \rho - \sigma'$. Because we have assumed that $\sigma'$ is diagonal in the same basis as $\rho$, then $\Delta$ must also be diagonal in the same basis as $\rho$. Next, we use the fact that the optimal projector $\tilde{\Pi}_{\text{opt}}$ is the one that projects onto the positive portion of the spectrum of $\Delta$~\cite{nielsen2000quantum, wilde_classical_2017}. In other words, if we expand $\Delta = Q - S$ where $Q\geq 0$ and $S\geq 0$ and $Q$ and $S$ are orthogonal, then the optimal projector projects onto the support of $Q$. Of course, both $Q$ and $S$ must be diagonal in the same basis of $\rho$, in order for $\Delta$ to satisfy this property. Therefore, the optimal projector $\tilde{\Pi}_{\text{opt}}$ must be diagonal in the same basis as $Q$ and $\Delta$, and hence diagonal in the same basis as~$\rho$.

Therefore, we can apply Lemma~\ref{lemma2} and combine it with Eq.~\eqref{eqn_lemma3_D} to obtain:
\begin{align}
  D_T(\rho,\sigma) &\geq \Tr [\tilde{\Pi}_{\text{opt}} (\rho - \sigma')]\\
  &= D_T(\rho, \sigma')\,.
\end{align}
This proves Eq.~\eqref{eqn_lemma3_A}, and hence proves the lemma.
\end{proof}

Finally, we solve for the explicit form for the set of optimal states in the following theorem. This theorem demonstrates that the state $\tau_R+N_R$ is optimal but not uniquely optimal. We further elaborate on this lack of uniqueness with an example in Sec.~\ref{sec:comp-hs-tr}.
\begin{theorem}
\label{theorem2}
For the QLRAP with $D=D_T$, one optimal solution is given by $\sigma^*(R) = \tau_R + N_R$. More generally, the set of optimal solutions is given by
\begin{align}\label{eqn_thm2_optimalset}
    \sigma^*(R) =& \bigg\{\sigma : \sigma = \sum_{i=1}^R \lambda_{\sigma i}\ketbra{e_i}, \notag\\
    &\lambda_{\sigma i} \geq \lambda_i\hspace{4pt} \text{for }i=1,...,R\bigg\}
\end{align}
where $\{ \lambda_i \}$ are the eigenvalues of $\rho$ in descending order and $\{\ket{e_i}\}_{i=1}^R$ are the eigenvectors of $\rho$ with the $R$-largest eigenvalues.
\end{theorem}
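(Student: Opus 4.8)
The plan is to leverage Lemma~\ref{lemma3} to reduce the trace-distance problem to a one-dimensional optimization over eigenvalues, and then to extract the full optimal set directly from the equality conditions of a single triangle inequality. By Lemma~\ref{lemma3}, any optimal $\sigma$ is diagonal in the eigenbasis of $\rho$, so I may write $\sigma = \sum_i \lambda_{\sigma i}\ketbra{e_i}$ with $\lambda_{\sigma i}\geq 0$, at most $R$ of them nonzero, and $\sum_i\lambda_{\sigma i}=1$. Since $\rho-\sigma$ is then diagonal in the shared basis, the trace distance collapses to $D_T(\rho,\sigma) = \tfrac12\sum_i |\lambda_i-\lambda_{\sigma i}|$, and everything reduces to a question about the two eigenvalue vectors.

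First I would split this sum according to the support $\SC$ of $\sigma$ (with $|\SC|\leq R$): the indices outside $\SC$ contribute $\sum_{i\notin\SC}\lambda_i$, since $\lambda_{\sigma i}=0$ there, while the indices inside contribute $\sum_{i\in\SC}|\lambda_{\sigma i}-\lambda_i|$. The key arithmetic observation is that normalization of both $\rho$ and $\sigma$ forces $\sum_{i\in\SC}(\lambda_{\sigma i}-\lambda_i) = \sum_{i\notin\SC}\lambda_i =: E$. Applying the triangle inequality gives $\sum_{i\in\SC}|\lambda_{\sigma i}-\lambda_i| \geq E$, so that $D_T(\rho,\sigma)\geq E = \sum_{i\notin\SC}\lambda_i$. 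Crucially, equality holds if and only if all differences $\lambda_{\sigma i}-\lambda_i$ share a common sign, which, because their sum $E$ is non-negative, means precisely $\lambda_{\sigma i}\geq\lambda_i$ for every $i\in\SC$.

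Next I would minimize the lower bound $\sum_{i\notin\SC}\lambda_i$ over the choice of support. Since the eigenvalues are strictly ordered, this is smallest exactly when $\SC$ consists of the $R$ largest eigenvalues, giving the minimal value $D_T^*=\sum_{i=R+1}^r\lambda_i$; this value is attained, for instance by the spectrum of $\tau_R+N_R$, whose eigenvalues $\lambda_i+E/R$ manifestly satisfy $\lambda_{\sigma i}\geq\lambda_i$ and sum to one, confirming it is the true optimum and establishing the claimed optimal solution. To characterize the full optimal set I would then read off the two tightness conditions: $\sum_{i\notin\SC}\lambda_i = \sum_{i=R+1}^r\lambda_i$ forces $\SC=\{1,\dots,R\}$ by non-degeneracy, and the triangle-inequality equality forces $\lambda_{\sigma i}\geq\lambda_i$ on this support. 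Together these reproduce exactly the set in Eq.~\eqref{eqn_thm2_optimalset} (with the normalization $\sum_{i=1}^R\lambda_{\sigma i}=1$ understood as part of the state constraint).

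The main obstacle, and really the only delicate point, is establishing that the stated conditions are \emph{necessary} and not merely sufficient, i.e. that \emph{every} optimal state has this form. This rests entirely on controlling the two places where slack can enter: the choice of support and the signs in the triangle inequality. Both are pinned down cleanly once one notes that strict non-degeneracy of the $\lambda_i$ makes the top-$R$ support the unique minimizer of $\sum_{i\notin\SC}\lambda_i$, and that the sign of the common difference is fixed by the non-negativity of $E$. In contrast to the Hilbert-Schmidt case, no convexity or Lagrangian machinery is needed here; the entire argument is driven by Lemma~\ref{lemma3} together with one triangle inequality whose equality case is tracked carefully, and the degeneracy of the optimal set emerges precisely because that equality case only constrains the $\lambda_{\sigma i}$ through an inequality rather than an equality.
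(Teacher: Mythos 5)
Your proposal is correct and follows essentially the same route as the paper's proof: both invoke Lemma~\ref{lemma3} to reduce to diagonal $\sigma$, split the trace distance according to the support $\SC$, lower-bound the on-support contribution by $\sum_{i\notin\SC}\lambda_i$, and extract the optimal set from the equality conditions together with the strict eigenvalue ordering forcing $\SC=\{1,\dots,R\}$. Your single triangle inequality with its sign-tracking equality case is just the summed form of the paper's term-wise bound $\delta_i\leq q_i$ with saturation iff $\delta_i\geq 0$, so the two arguments are mathematically identical.
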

\begin{proof}
The proof that $\sigma^*(R) = \tau_R + N_R$ is an optimal solution follows a very similar path of the proof of Theorem~\ref{theorem1}. For example, one can essentially take Proof 2 of Theorem~\ref{theorem1}, replace $D_{HS}$ with $D_T$ and appropriately remove the squares on the various terms, in order to argue that $\sigma^*(R) = \tau_R + N_R$ is optimal. 

However, more elegantly, we can note that $\tau_R + N_R$ is a special case of the family in Eq.~\eqref{eqn_thm2_optimalset}, and focus on proving that the entire family is optimal.

Invoking Lemma~\ref{lemma3}, we can write:
\begin{align}
    2D_{T}(\rho,\sigma) &=  \sum_{i} |\lambda_{\sigma i} - \lambda_{ i} | \\
    &=  \sum_{i\in \SC} |\lambda_{\sigma i} - \lambda_{ i}| + \sum_{i\notin \SC} \lambda_{ i}\\
    &=  \sum_{i\in \SC} q_i + \sum_{i\notin \SC} \lambda_{ i}
\end{align}
where $\SC$ is the set of indices over which $\sigma$ has support, and $q_i = |\lambda_{\sigma i} - \lambda_{ i}|$. Let us define $\delta_i := \lambda_{\sigma i} - \lambda_{ i}$ and note that $\delta_i \leq q_i$. Therefore we obtain the bound:
\begin{align}
    2D_{T}(\rho,\sigma) &\geq  \sum_{i\in \SC} \delta_i + \sum_{i\notin \SC} \lambda_{ i}\\
    &= \sum_{i\in \SC}( \lambda_{\sigma i} - \lambda_{ i}) + \sum_{i\notin \SC} \lambda_{ i}\\
    &= 1 - \sum_{i\in \SC} \lambda_{ i} + \sum_{i\notin \SC} \lambda_{ i}\\
    &= 2 \sum_{i\notin \SC} \lambda_{ i}\,.\label{eq:TraceDistanceEquality}
\end{align}
It is clear that the bound $2D_{T}(\rho,\sigma)\geq 2 \sum_{i\notin \SC} \lambda_{ i}$ is saturated (i.e., becomes an equality) iff $\delta_i = q_i$ for all $i \in \SC$. In turn, this condition holds iff $\delta_i \geq 0$ for all $i \in \SC$, which is equivalent to the condition:
\begin{equation}\label{eqn_saturation_condition}
    \lambda_{\sigma i} \geq \lambda_{ i},\quad\text{for all }i \in \SC\,.
\end{equation}
Hence, for fixed choice of $\SC$, the minimum value of $D_{T}(\rho,\sigma) $ is $\sum_{i\notin \SC} \lambda_{ i}$, and this minimum is achieved iff Eq.~\eqref{eqn_saturation_condition} is satisfied.

Next we can consider varying the choice of $\SC$. Let $G(S) = \sum_{i\notin \SC} \lambda_{ i}$, which is the minimum value for $D_{T}(\rho,\sigma) $ for fixed $\SC$. It is clear that the minimum value of $G(S)$, over the choice of $\SC$, is achieved whenever the indices in $\SC$ correspond to the $R$ largest eigenvalues of $\rho$. In that case, $G(S) = \sum_{i >R} \lambda_{ i}$ is the sum over the $r-R$ smallest eigenvalues of $\rho$. Therefore, the optimal choice is $\SC = \{1,...,R\}$.
Therefore we have that the minimum value of $D_{T}(\rho,\sigma) $ is $\sum_{i>R} \lambda_{ i}$, and this minimum is achieved iff
\begin{equation}\label{eqn_theorem2_optimalstate}
    \sigma = \sum_{i=1}^R \lambda_{\sigma i}\ketbra{e_i}\,
\end{equation}
with
\begin{equation}\label{eqn_saturation_condition2}
    \lambda_{\sigma i} \geq \lambda_{ i},\quad\text{for }i = 1,...,R\,.
\end{equation}
The conditions in Eqs.~\eqref{eqn_theorem2_optimalstate} and \eqref{eqn_saturation_condition2} together define the optimal set, as stated in \eqref{eqn_thm2_optimalset}. This completes the proof.
\end{proof}

It follows from Eq.~\eqref{eq:TraceDistanceEquality} and $\SC = \{1,...,R\}$ that the optimal trace distance can be written as
\begin{equation}
    D^*_T(\rho, \sigma^*(R)) = 1 - \Tr[\Pi_R \rho] = \sum_{i=R+1}^r \lambda_i \, ,
\end{equation}
which is simply the sum of the eigenvalues of $\rho$ that are not approximated. This is clearly a monotonically decreasing function of $R$ and is only 0 when $R = r$. Further, it makes the meaning of the family of optimal states very clear: they are the states whose only error contribution is from the $r - R$ eigenvalues that cannot be approximated with a rank constraint $\rank(\sigma) \leq R$.

\subsection{\label{sec:comp-hs-tr}Comparison of Hilbert-Schmidt and Trace Distance}

The following example helps to illustrate our theorems above, especially how they differ for the Hilbert-Schmidt and trace distance.

Suppose we wish to approximate the state $\Tilde{\rho} = U \text{diag}(\vec{\Tilde{\lambda}}) U^{\dagger}$, $\vec{\Tilde{\lambda}} = (0.41, 0.39, 0.2, 0.0)$
with a rank 2 approximation. By Lemma~\ref{lemma1} and Lemma~\ref{lemma3}, the optimal approximation, denoted $\sigma^*(R = 2)$, must be diagonal in the same basis as $\Tilde{\rho}$. That is, $\sigma^* = U \text{diag}(\lambda^*_{\sigma 1}, \lambda^*_{\sigma 2}, 0, 0)U^{\dagger}$. By Theorem~\ref{theorem1}, the \emph{unique} optimal eigenvalues that minimize the Hilbert-Schmidt distance are $\lambda^*_{\sigma 1} = 0.51$ and $\lambda^*_{\sigma 2} = 0.49$ with corresponding optimal distance $D^*_{HS} \equiv D_{HS}(\Tilde{\rho}, \sigma^*(R)) = 0.06$. By Theorem~\ref{theorem2} this state also minimizes the trace distance $D^*_{T} \equiv D_T(\Tilde{\rho}, \sigma^*) = 0.4$, but it is not the only state that achieves $D^*_T$. This distinction between the unique solution for $D_{HS}$ and highly degenerate set of solutions for $D_T$ can be represented visually as shown in Fig.~\ref{fig:contour-plots}. 

\begin{figure}[ht]
    \centering
    \subfloat[The solution space of the QLRAP with the Hilbert-Schmidt distance is unique]{\label{fig:2norm-contour}\includegraphics[width=0.37\textwidth]{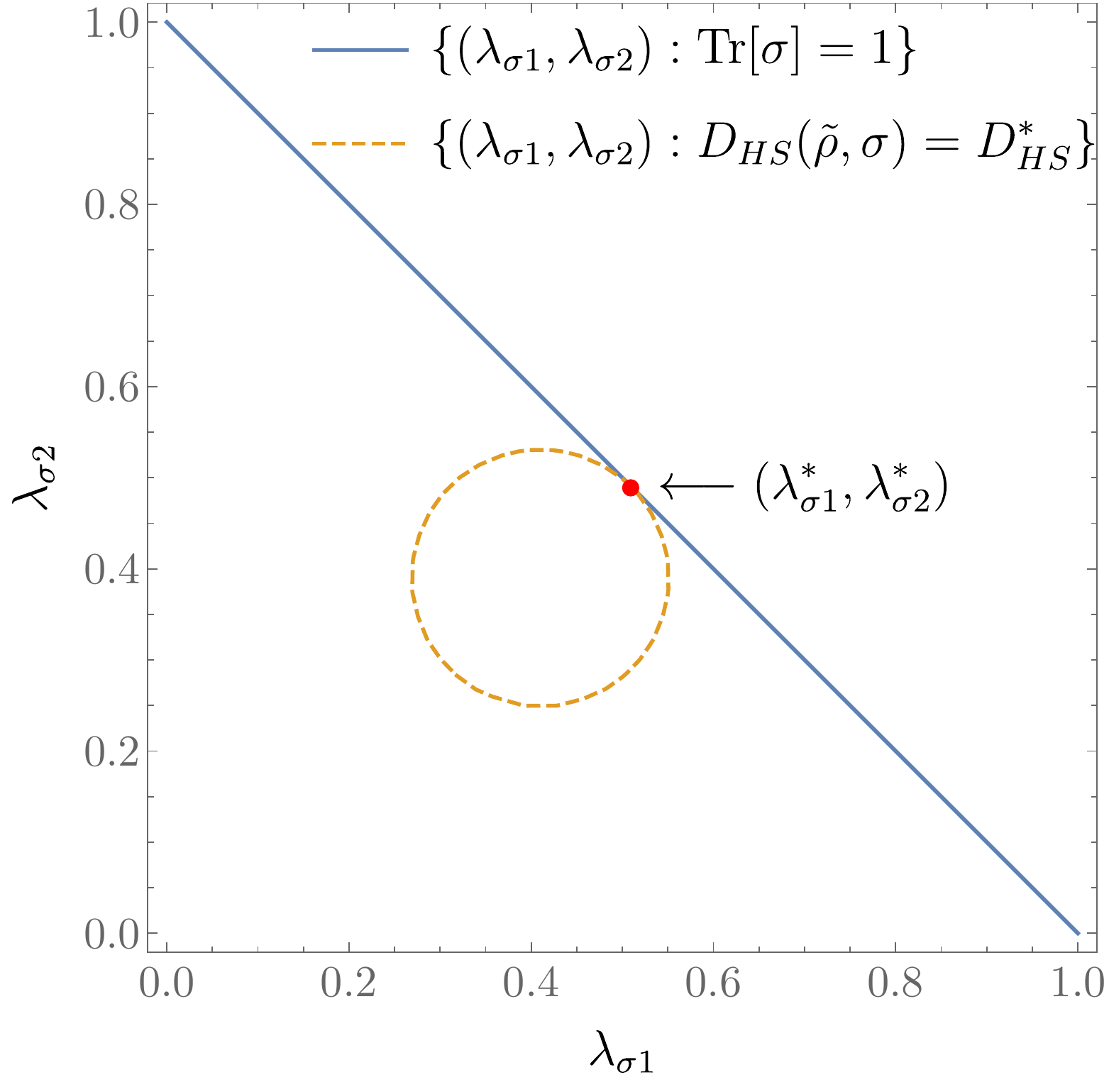}}
     \hfill
    \subfloat[The solution space of the QLRAP with the trace distance is highly degenerate]{\label{fig:1norm-contour}\includegraphics[width=0.37\textwidth]{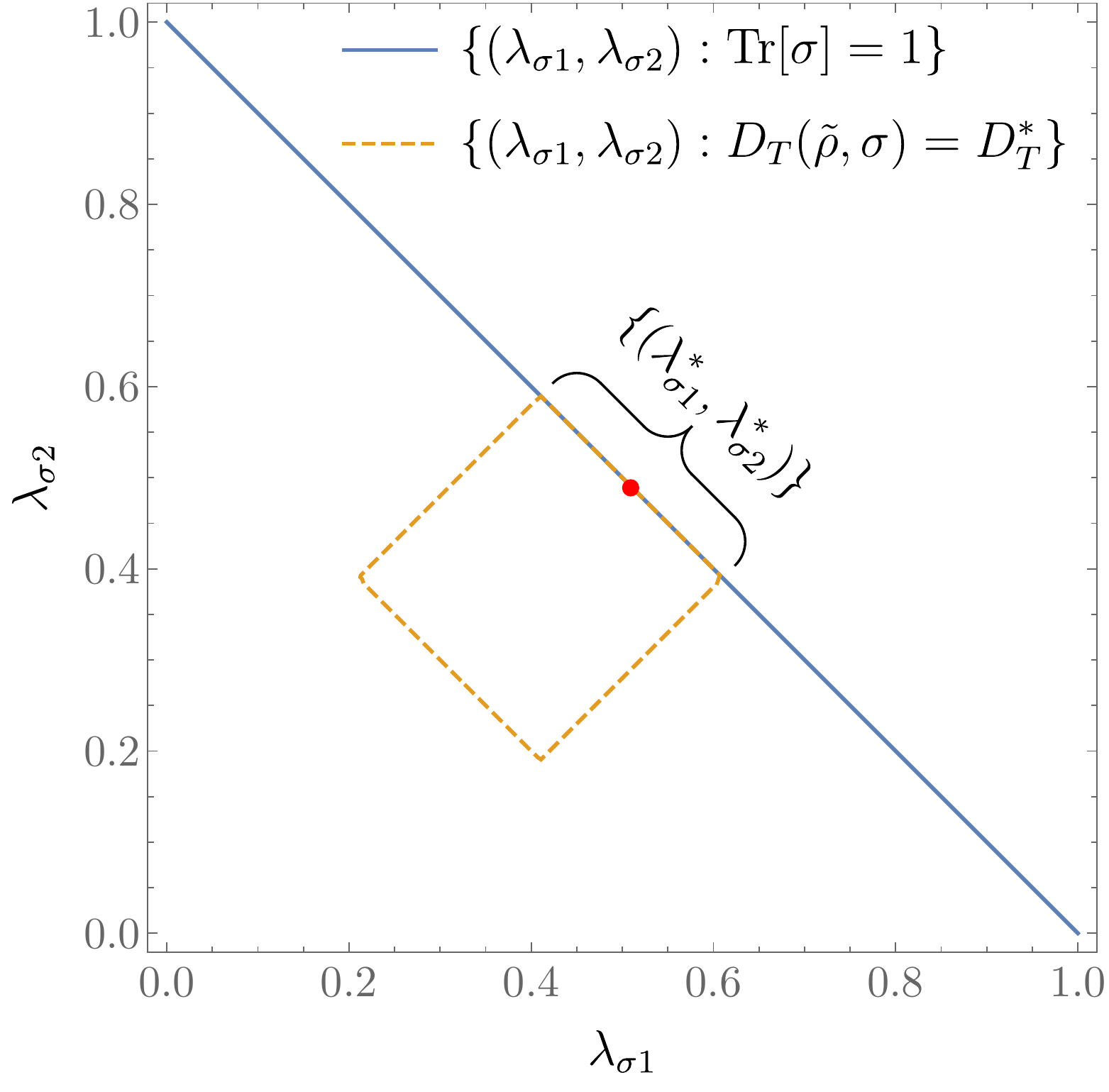}}
    \caption{\textbf{Comparison of QLRAP solutions for Hilbert-Schmidt distance and trace distance.} We explore the optimal rank 2 approximations of $\tilde{\rho}$ (defined in text) for the (a) Hilbert-Schmidt distance and the (b) trace distance. As explained in the text, the optimal set of solutions to QLRAP lie at the intersection of two constraints $\Tr[\sigma] = 1$ and $D(\Tilde{\rho}, \sigma) = D^*$. (We note that the point denoting $\Tilde{\rho}$ is not plotted as it sits outside of the plane shown). For $D_{HS}$, this set is unique and given by Theorem~\ref{theorem1}. For $D_T$, this set is highly degenerate and given by Theorem~\ref{theorem2}. In fact, two valid solutions for the trace distance can have different eigenvector orderings. }
    \label{fig:contour-plots}
\end{figure}

To explain the plot, we first assume that all states we consider are diagonal in the same basis. This means we may uniquely identify a state with its eigenvalues. Since $\rank(\Tilde{\rho}) = 3$, its eigenvalue vector is of the form $(\Tilde{\lambda}_{1}, \Tilde{\lambda}_{2}, \Tilde{\lambda}_{3}, 0)$ which we can visualize as a point in $\mathbb{R}^3$. A rank 2 approximation of $\rank(\Tilde{\rho})$ by definition only has two non-zero eigenvalues and so sits on the $\mathbb{R}^2$ slice through this space where $\lambda_{\sigma 3} = 0$. In this plane, the solution-space to the QLRAP is just the set of $\{(\lambda_{\sigma 1}, \lambda_{\sigma 2})\}$ that is normalized $\Tr[\sigma] = \lambda_{\sigma 1} + \lambda_{\sigma 2} = 1$ and achieves the optimal distance $D(\Tilde{\rho}, \sigma) = D^*$. For the Hilbert-Schmidt distance, the intersection of these two constraints is a single point as denoted by a labeled red dot in Fig.~\ref{fig:2norm-contour}. For the trace distance, the solution space is highly degenerate and consists of all points satisfying $\lambda_{\sigma 1} \geq 0.41$ and $\lambda_{\sigma 2} = 1 - \lambda_{\sigma 2} \geq 0.39$ and is denoted by a curly bracket in Fig.~\ref{fig:1norm-contour}.

That the trace distance leads to such a highly degenerate solution space is problematic in applications. For example, $\lambda^{'}_{\sigma} = (0.49, 0.51)$ is also a valid solution which swaps the ordering of the corresponding eigenvectors. Going beyond this example, the solutions $\sigma^*$ for the trace distance do not generally have a unique eigenvector ordering. The Hilbert-Schmidt distance, on the other hand, does obviously have a unique eigenvector ordering. Thus, in applications where the ordering of eigenvectors has an important meaning--such as PCA--the Hilbert-Schmidt distance should be preferred over the trace distance.

\section{\label{sec:PCA}Application: Principal Component Analysis}

We now discuss the application of our results to Principal Component Analysis (PCA). 

Quantum algorithms for PCA have been proposed, but often involve deep circuits and large qubit requirements~\cite{lloyd_quantum_2014}. More recently, some variational quantum algorithms for PCA have received attention, due to their resource efficiency~\cite{larose2019variational,cerezo2020variational,verdon2019quantum}. However, more research is needed to understand their scalability and performance on various datasets. 

Here we discuss how our results above could be used to construct a novel variational quantum algorithm for PCA. Suppose that one has a means to variationally prepare a mixed state $\sigma$ with a rank constraint, $\rank(\sigma)\leq R$. This would involve, say, acting with a parameterized quantum channel $\EC_{\vec{\theta}}$ to give \begin{equation}
\sigma_{\vec{\theta}} = \EC_{\vec{\theta}} (\ketbra{\vec{0}})
\end{equation}
Naturally there are various ways to implement $\EC_{\vec{\theta}}$. For example, one could implement a Stinespring dilation (i.e., prepare a purification of $\sigma_{\vec{\theta}}$) and then trace out the ancilla qubits, with the number of ancilla qubits controlling the rank of $\sigma_{\vec{\theta}}$. Alternatively, one could use classical randomness to randomly prepare the different eigenvectors of $\sigma_{\vec{\theta}}$, with the number of eigenvectors controlling the rank of $\sigma_{\vec{\theta}}$. 

Once $\sigma_{\vec{\theta}}$ is prepared on a quantum device, one can then efficiently estimate the Hilbert-Schmidt distance: 
\begin{equation}\label{eqn_dhs_cost}
D_{HS}(\rho,\sigma_{\vec{\theta}}) = \Tr(\rho^2) +  \Tr(\sigma_{\vec{\theta}}^2) - 2  \Tr(\rho \sigma_{\vec{\theta}}) 
\end{equation}
Each of the three terms in this expression can be efficiently estimated with the destructive SWAP test~\cite{cincio2018learning}, where one measures the SWAP operator to estimate the Hilbert-Schmidt inner product.  One can then use the distance in Eq.~\eqref{eqn_dhs_cost} as the cost function in a variational optimization loop. Minimizing this cost will result in learning the state $\sigma_{\vec{\theta}}^* = \tau_R +N_R $.

Depending how one prepares $\sigma_{\vec{\theta}}^*$, there are then various ways to extract the principal components. For example. if one prepared the purification of  $\sigma_{\vec{\theta}}^*$, then an appropriate measurement on the ancilla system would prepare the eigenvectors of $\sigma_{\vec{\theta}}^*$ on the other system. Alternatively, if one used classical randomness to prepare $\sigma_{\vec{\theta}}^*$, then one already has the quantum circuits to prepare its eigenvectors. Since the eigenvectors of $\sigma_{\vec{\theta}}^*$ correspond to the principal components of $\rho$, then this corresponds to performing PCA.

A key benefit of using the Hilbert-Schmidt distance is that the optimal state is unique, and also that the ordering of the eigenvalues of $\sigma_{\vec{\theta}}^*$ match the ordering of the eigenvalues of $\rho$. This means that one will infer the correct principal components of $\rho$ via $\sigma_{\vec{\theta}}^*$. In contrast, it is important to note that the degeneracy issues associated with trace distance make it less useful for PCA. Because the solution is non-unique for trace distance, there can exist some solutions that correspond to the wrong ordering of the eigenvalues. Hence, employing the trace distance could mislead one to infer the wrong eigenvalue ordering for $\rho$. We therefore advocate using the Hilbert-Schmidt distance, which also has the additional benefit of being efficiently estimatable on a quantum device.

\section{Conclusions}

In this work, we introduced the quantum low-rank approximation problem (QLRAP). We presented the complete solutions to the QLRAP for both the Hilbert-Schmidt distance and the trace distance. We found that the Hilbert-Schmidt distance yields a unique solution with the ordering of the eigenvalues matching those of the target state $\rho$. In contrast, the trace distance leads to a family of solutions, in which the eigenvalue ordering might not match that of the target state. We argued that this makes the Hilbert-Schmidt distance more useful for PCA applications than the trace distance.

In future work, we plan to further investigate using the  Hilbert-Schmidt distance for PCA applications, in the context of variational quantum algorithms. We expect that this will lead to a resource-efficient method for performing PCA on near-term quantum computers. We additionally plan to use our results to guide the development of algorithms to prepare low-rank approximations to mixed states. We expect such compression algorithms to find use on near-term noisy quantum hardware, where reducing resources required (e.g. number of qubits and/or number of circuits) to store and process quantum states is critical.

\begin{acknowledgments}

We thank Lukasz Cincio, Andrew Sornborger, and Mark Wilde for helpful conversations. We thank Daniel Lidar for suggested Proof 1 as a simple alternative to Proof 2 for Theorem 1. NE was supported by the U.S. Department of Energy (DOE) Computational Science Graduate Fellowship under Award Number DE-SC0020347. ZH acknowledges support from the Los Alamos National Laboratory (LANL) Mark Kac Fellowship. PJC acknowledges initial support from the LANL ASC Beyond Moore's Law project. PJC was also supported by the U.S. DOE, Office of Science, Office of Advanced Scientific Computing Research, under the Accelerated Research in Quantum Computing (ARQC) program.
\end{acknowledgments}

\bibliography{quantum.bib}

\end{document}